\newcommand{\cleqn}{\setcounter{equation}{0}}
\newcommand{\clth}{\setcounter{theorem}{0}}
\newcommand {\sectionnew}[1]{\section{#1}\cleqn\clth}
\newtheorem{theorem}{Theorem}[section]
\newtheorem{proposition}[theorem]{Proposition}
\newtheorem{remark}[theorem]{Remark}
\renewcommand{\L}{\mathcal{L}}
\newcommand{\M}{\mathcal{M}}
\newcommand{\Z}{\mathbb{Z}}
\newcommand{\B}{\mathcal{B}}
\def\({\left(}
\def\){\right)}
\def\[{\begin{eqnarray}}
\def\]{\end{eqnarray}}
\def\d{\partial}
\begin{document}

\title{Symmetries and reductions on the noncommutative Kadomtsev-Petviashvili and Gelfand-Dickey hierarchies}
\author{Chuanzhong Li}
\dedicatory {  Department of Mathematics, Ningbo University, Ningbo, 315211 Zhejiang, P.\ R.\ China\\
Email:lichuanzhong@nbu.edu.cn}

\thanks{}

\texttt{}
\date{}

\begin{abstract}
In this paper, we construct the additional flows  of the noncommutative Kadomtsev-Petviashvili(KP) hierarchy and the additional symmetry flows constitute an infinite dimensional Lie algebra $W_{1+\infty}$. In addition, the generating function of the additional symmetries can also be proved to have a nice form in terms of wave functions and this generating symmetry is used to construct the noncommutative KP hierarchy with self-consistent sources and the constrained noncommutative KP hierarchy.  The above results will be further generalized to the noncommutative Gelfand-Dickey hierarchies which contains many interesting noncommutative integrable systems such as the noncommutative KdV hierarchy and noncommutative Boussinesq hierarchy. Meanwhile, we construct two new noncommutative systems including odd noncommutative C type Gelfand-Dickey and even noncommutative C type Gelfand-Dickey hierarchies. Also using the symmetry, we can construct a new noncommutative  Gelfand-Dickey hierarchy with self-consistent sources. Basing on the natural differential Lax operator of the noncommutative  Gelfand-Dickey hierarchy, the string equations of the noncommutative  Gelfand-Dickey hierarchy are also derived.
\end{abstract}

\maketitle
Mathematics Subject Classifications (2010):  37K05, 37K10, 35Q53.\ \ \\
\ \ \ Keywords:  noncommutative  KP hierarchy, noncommutative Gelfand-Dickey hierarchy, noncommutative Gelfand-Dickey hierarchy with self-consistent sources, odd noncommutative C type Gelfand-Dickey hierarchy, even noncommutative C type Gelfand-Dickey hierarchy, additional symmetry, $\emph{W}_{1+\infty}$ Lie algebra, String equation.\\
\allowdisplaybreaks
 \setcounter{section}{0}

\sectionnew{Introduction}

The Kadomtsev-Petviashvili(KP) hierarchy(\cite{A},\cite{B}) is one of the most important integrable hierarchy. It arises in many different fields of mathematics and physics such as enumerative algebraic geometry, topological field theory, string theory and so on. The KP hierarchy has the well-known Virasoro  symmetry which was extensively studied in literature(\cite{B}-\cite{C}).

 The noncommutative field theory is  a fruitful subject
in both mathematics and physics particularly in noncommutative integrable systems \cite{5,6,7,KPTodaNC,JNMPnonToda}. The noncommutative theory gives rise to various new physical objects  in quantum mechanics such as the canonical commutation
relation $[q,p]=i\hbar$.
As said in \cite{Hamanaka}, the noncommutative parameter is
closely related to the  existence of a background flux in the effective theory of D-branes. With the presence of background magnetic fields the
noncommutative gauge theories were found to be equivalent to ordinary
gauge theories and noncommutative solitons play important roles in the study of
D-brane dynamics\cite{Dbrane}.

Similar to the additional symmetry flows of the KP hierarchy which was given by Orlov and Shulman(\cite{C}), the  additional symmetry flows  can constitute a centerless $W_{1+\infty}$  algebra.
 Therefore  motivated by the results on the classical KP hierarchy(\cite{G},\cite{H}) , we will give generating functions of the additional symmetries of the noncommutative  KP hierarchy in terms of wave functions which can lead to the noncommutative KP hierarchy with self-consistent sources and the
constrained noncommutative KP hierarchy.

The Gelfand-Dickey hierarchy was introduced by I. M. Gelfand and L. A. Dickey \cite{GD} and  has attracted a lot of attention in the research of
integrable systems. The  Hamiltonian theory of Gelfand-Dickey hierarchy as developed
 in terms of Lax pair which can be seen from the
Dicky's book\cite{B} for detail.
As we know, the string equation formally as $[P,Q]=1$ which connects the Lax operator and
Orlov-Shulman operator is very useful in application on the partition function of the
 string theory \cite{nakatsu}.  The possible
physical interest
in a non-commutative generalization of Orlov's work \cite{orlovwinternitz,orlovwinternitz2} might be an exciting and interesting subject and this becomes one important motivation for us to consider additional symmetries of noncommutative Gelfand-Dickey hierarchy of which the B\"acklund transformation was constructed in \cite{heJHEP}.

The non-commutative KP and Gelfand-Dickey hierarchies have shown to be
Hamiltonian systems for example in \cite{5,7,2}. Non-commutative
Hamiltonian systems in finite dimension \cite{8} and infinite dimension
\cite{1} have a nice algebraic formulation and can be described for any
associative algebra.
In this paper, we will restrict the associative algebra to the case under the
Moyal product which will reviewed later in detail in the next section.

The organization of this paper is as follows. We firstly review the Lax equation of the noncommutative  KP hierarchy in Section 2. In Section 3, under the basic Sato theory, we construct the additional symmetry of the noncommutative  KP hierarchy and later the noncommutative  KP hierarchy with self-consistent sources  and the constrained noncommutative KP hierarchy will be constructed. In Section 4. We further do a reduction on the noncommutative  KP hierarchy to get the noncommutative  Gelfand-Dickey hierarchy and we also construct two new noncommutative systems including odd noncommutative C type Gelfand-Dickey and even noncommutative C type Gelfand-Dickey hierarchies. In Section 5,  we construct the additional symmetry of the noncommutative  Gelfand-Dickey hierarchy and meanwhile we define the noncommutative  Gelfand-Dickey hierarchy with self-consistent sources. The String equations of the noncommutative  Gelfand-Dickey hierarchy will be studied in Section 6.

\sectionnew{Noncommutative  KP hierarchy}
The KP  hierarchy
  is one of the most important topics  in the
area of classical integrable systems. In the noncommutative system, Moyal product $\star$ is defined by a skew symmetric matrix $\theta_{uv}$ as
\begin{equation*}
f\star g=\exp\Big(\frac{i}{2}\theta_{uv}{\d_{t_u}}{\d_{\bar t_v}}\Big)f(t)g(\bar t)\mid_{t=\bar t=t}
        =f(t)g(t)+\frac{i}{2}\theta_{uv}\d_{t_u}f(t)\d_{t_v}g(t)+\vartheta(\theta^2)
\end{equation*}
where $\vartheta(\theta^2)$ means the higher order terms. The matrix $\theta = (\theta_{uv})$
can contain functions in the variables $t_u$.
We can get that $[t_u,t_v]_{\star}=t_u\star t_v-t_v\star t_u=i\theta_{uv}$,
and when $\theta_{uv}\rightarrow0$, the noncommutative system can be reduced to the commutative ones.
The noncommutative KP  hierarchy is constructed by the pseudo-differential
operator $L=\d+u_2\d^{-1}+u_3\d^{-2}+....$ like this:
\[L_{t_n}=[B_n,L]_{\star}:= B_n\star L-L\star B_n,\]
 where $B_n=(L^n)_+$ and ``+" means the projection on nonnegative powers of
$\partial$.
In order to define the noncommutative   KP hierarchy, we need a formal adjoint operation $*$ for an arbitrary pseudo-differential operator $P=\sum\limits_{i}p_{i}\star\d^{i}$, we have $P^{*}=\sum\limits_{i}(-1)^{i}\d^{i}\star p_{i}$. Meanwhile, we have $\d^{*}=-\d$, $(\d^{-1})^{*}=-\d^{-1}$, and $(A\star B)^{*}=B^{*}\star A^{*}$ for two operators. The noncommutative  KP hierarchy in the Lax equation has the form
\begin{equation}
\frac{\d L}{\d t_{n}}=[B_{n},L]_{\star},\ \ n=1,2,...,
\end{equation}
where $u_{i}=u_{i}(t_{1}, t_{2}, t_{3}, ...)$.\\

From the first equation of the noncommutative
KP hierarchy we have that
$\partial_x = \partial_{t_1}$,
hence a solution depends only
on $x + t_1; t_2;\dots.$ The combination $x + t_1$ will be relabeled by $x$.
This hierarchy contains the $(2+1)$ dimensional noncommutative KP equation:
\begin{equation}\label{noncommutativeKPequation}
v_{t_3}-\frac14v_{xxx}-\frac34v_{x}\star v-\frac34v\star v_{x}-\frac34\int v_{t_2,t_2}dx+\frac34[v,\int v_{t_2}d x]_{\star}=0,
\end{equation}
 where $v=\frac12{u_2}$.

Meanwhile, we can give the noncommutative  KP hierarchy by the consistent conditions of the following set of linear partial differential equations
\begin{equation}
L\star\omega(t,\lambda)=\lambda\star\omega(t,\lambda),\ \ \frac{\d \omega(t,\lambda)}{\d t_{n}}=B_{n}\star\omega(t,\lambda),\ \ t=(t_{1}, t_{2}, ...).
\end{equation}
Here, $\omega(t,\lambda)$ is defined as a wave function, and let $\phi=1+\sum\limits_{i=1}^{\infty}\omega_{i}\d^{-i}$ be the wave operator of the noncommutative  KP hierarchy. The Lax operator and the wave function can be represented as
\begin{equation}\label{e}
L=\phi\star\d\star\phi^{-1},\ \ \omega(t,\lambda)=\phi(t)\star e^{\xi(t,\lambda)},
\end{equation}
in which $\xi(t,\lambda)=\lambda\star t_{1}+\lambda^{2}\star t_{2}+...$. \\
The Lax equation is equivalent to the following Sato equation
\begin{equation}\label{h}
\frac{\d \phi}{\d t_{n}}=-L_{-}^{n}\star\phi.
\end{equation}

\sectionnew{Additional symmetry of noncommutative   KP  hierarchy}
Firstly, we define the operator $\Gamma$ and the Orlov-Shulman's operator $M$ as
\begin{equation}\label{kpGDm}
M=\phi\star\Gamma\star \phi^{-1},\ \ \Gamma=\sum\limits_{i=0}^{\infty}it_{i}\d^{i}.
\end{equation}
Meanwhile, the Orlov-Shulman's operator $M$ satisfies the following identities
\begin{equation}
[L,M]_{\star}=1,\ \ \d_{t_{n}}M=[B_{n},M]_{\star},\ \ M\star\omega(t,z)=\d_{z}\omega(t,z).
\end{equation}
Further, we can get
\begin{equation}
\frac{\d M^{m}}{\d t_{n}}=[B_{n},M^{m}]_{\star},\ \ \frac{\d M^{m}L^{l}}{\d t_{n}}=[B_{n},M^{m}L^{l}]_{\star}.
\end{equation}
Moreover, to the wave function $\omega(t,z)$, $(L,M)$ is anti-isomorphic to $(z,\d_{z})$ with $[z,\d_{z}]_{\star}=-1$. We get
\begin{equation}
M^{m}\star L^{l}\star\omega(t,z)=z^{l}\star(\d_{z}^{m}\omega(t,z)),\ \  L^{l}\star M^{m}\star\omega(t,z)=\d_{z}^{m}(z^{l}\star\omega(t,z)).
\end{equation}
Next, we should consider the adjoint wave function $\omega^{*}$ and the adjoint Orlov-Shulman's operator $M^{*}$ that are useful for constructing the additional symmetry of the noncommutative  KP hierarchy, we have
\begin{equation}
\omega^{*}(t,z)=(\phi^{*})^{-1}\star e^{-\xi(t,z)},\ \ \xi(t,z)=\sum\limits_{i=0}^{\infty}t_{i}\lambda^{i}.
\end{equation}

However, $L^{*}$ and $M^{*}$ satisfy $[L^{*},M^{*}]_{\star}=[M,L]_{\star}^{*}=-1$.
Furthermore, we have
\begin{equation}
L^{*}\star\omega^{*}=z\star\omega^{*},  \d_{t_{n}}\omega^{*}=-B_{n}^{*}\star\omega^{*}.
\end{equation}
For $B_{n}^{*}\star\omega^{*}$, if the operator $A$ is a differential operator and has form $A:=\sum_{n=0}^{\infty}\partial^n a_n,$ then we define $A^{\ast} \star g(x)=\sum_{m=0}^{\infty}(-1)^{m}(\partial^{m}g(x))\star a_m.$
Next, we give the additional symmetries of the noncommutative  KP hierarchy. Firstly, we introduce additional independent variables $t_{m,l}^{*}$ and define the action of the additional flows on the wave operator as
\begin{equation}\label{kpGDf}
\frac{\d \phi}{\d t_{m,l}^{*}}=-(C_{m,l})_{-}\star \phi,
\end{equation}
in which \begin{equation}\label{kpGDl}
C_{m,l}=M^{m}\star L^{l},\ \ m,l\in \Z.
\end{equation}
In addition, we give some useful identities in the following proposition.
\begin{proposition}\label{kpGDg}The following identities hold true
\begin{equation}\label{LnMm}
\frac{\d L^{n}}{\d t_{m,l}^{*}}=-[(C_{m,l})_{-},L^{n}]_{\star}, \frac{\d M^{m}}{\d t_{m,l}^{*}}=-[(C_{m,l})_{-},M^{m}]_{\star},
\end{equation}
\begin{equation}\label{LnMm2}
\frac{\d C_{n,k}}{\d t_{m,l}^{*}}=-[(C_{m,l})_{-},C_{n,k}]_{\star}, \frac{\d C_{n,k}}{\d t_{n}}=[B_{n},C_{n,k}]_{\star}.
\end{equation}
\end{proposition}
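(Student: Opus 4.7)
The plan is to reduce all four identities to the Sato-type evolution (\ref{kpGDf}) by conjugation with $\phi$, exploiting the fact that the Moyal star product is associative, so the ordinary Leibniz rule is available throughout. First I will differentiate $\phi \star \phi^{-1}=1$ with respect to $t^{*}_{m,l}$ to obtain
\[
\frac{\partial \phi^{-1}}{\partial t^{*}_{m,l}}
=-\phi^{-1}\star\frac{\partial \phi}{\partial t^{*}_{m,l}}\star \phi^{-1}
=\phi^{-1}\star (C_{m,l})_{-},
\]
using (\ref{kpGDf}) at the last step.

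Next, since $L=\phi\star\partial\star\phi^{-1}$ gives $L^{n}=\phi\star\partial^{n}\star\phi^{-1}$ (because $\phi^{-1}\star\phi=1$ telescopes the interior factors), and since the dressing variable $\partial^{n}$ is independent of $t^{*}_{m,l}$, a direct application of Leibniz yields
\[
\frac{\partial L^{n}}{\partial t^{*}_{m,l}}
=-(C_{m,l})_{-}\star\phi\star\partial^{n}\star\phi^{-1}
+\phi\star\partial^{n}\star\phi^{-1}\star (C_{m,l})_{-}
=-[(C_{m,l})_{-},L^{n}]_{\star}.
\]
The argument for $M^{m}$ is identical because $\Gamma=\sum_{i\ge 0}i t_{i}\partial^{i}$ does not depend on the additional time $t^{*}_{m,l}$; writing $M^{m}=\phi\star\Gamma^{m}\star\phi^{-1}$ and repeating the Leibniz computation produces the second relation in (\ref{LnMm}).

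For $C_{n,k}=M^{n}\star L^{k}$, I will apply Leibniz once more and substitute the two identities just established:
\[
\frac{\partial C_{n,k}}{\partial t^{*}_{m,l}}
=-[(C_{m,l})_{-},M^{n}]_{\star}\star L^{k}
-M^{n}\star[(C_{m,l})_{-},L^{k}]_{\star},
\]
and the two middle terms $M^{n}\star (C_{m,l})_{-}\star L^{k}$ cancel, leaving $-[(C_{m,l})_{-},C_{n,k}]_{\star}$, which is the first part of (\ref{LnMm2}). The second part of (\ref{LnMm2}) is obtained the same way using the already recorded identities $\partial_{t_{n}}L=[B_{n},L]_{\star}$ and $\partial_{t_{n}}M=[B_{n},M]_{\star}$ (equivalently, one may read it off directly from the second formula in the display just after (\ref{kpGDm})), yielding $[B_{n},M^{n}]_{\star}\star L^{k}+M^{n}\star[B_{n},L^{k}]_{\star}=[B_{n},C_{n,k}]_{\star}$ by the same cancellation.

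I do not expect a substantial obstacle here: every step is a conjugation identity that is formally identical to its commutative counterpart, and associativity of $\star$ is what lets the Leibniz rule run unchanged. The only mild subtlety to be careful about is keeping the order of factors consistent (since $\star$ is noncommutative) when expanding $\partial_{t^{*}_{m,l}}\phi^{-1}$ and when telescoping $M^{n}\star\phi^{-1}\star\phi\star L^{k}$; once these orders are respected, the cancellations above go through and all four identities follow.
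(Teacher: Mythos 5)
Your proposal is correct and follows essentially the same route as the paper: conjugation via the dressing structures $L^{n}=\phi\star\partial^{n}\star\phi^{-1}$, $M^{m}=\phi\star\Gamma^{m}\star\phi^{-1}$ together with the additional Sato equation \eqref{kpGDf} and the observation that $\Gamma$ is independent of $t^{*}_{m,l}$; you merely write out the Leibniz computation that the paper compresses into ``a direct calculation.'' Your handling of the last identity via $\partial_{t_{n}}M^{m}=[B_{n},M^{m}]_{\star}$ and $\partial_{t_{n}}L^{k}=[B_{n},L^{k}]_{\star}$ is a harmless (and slightly cleaner) variant of the paper's dressing of $C_{n,k}$, since it sidesteps the explicit $t_{n}$-dependence of $\Gamma$.
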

\begin{proof}
The proof is very classical. We only need to consider the corresponding dressing structures
\begin{equation}
L^n=\phi\star\d^n\star\phi^{-1},\ \ M^m=\phi\star\Gamma^m\star\phi^{-1},
\end{equation}
and a direct calculation will lead to the results \eqref{LnMm} using the additional Sato equation\eqref{kpGDf}.
Similarly we can also consider the dressing structure
\begin{equation}
C_{n,k}=\phi\star \Gamma^{n}\star \partial^{k}\star\phi^{-1},\
\end{equation}
a direct calculation will lead to the results \eqref{LnMm2} using the additional Sato equation\eqref{kpGDf}.
We need to note that $\Gamma$ depends on $t_n$ and does not depend on $ t_{m,l}^{*}$.
\end{proof}

\begin{proposition}
The additional flows $\d_{t_{m,l}^{*}}$ commute with the flows $\d_{t_{n}}$ of the noncommutative  KP hierarchy, which can be shown as
\begin{equation}
[\d_{t_{m,l}^{*}},\d_{t_{n}}]_{\star}=0.
\end{equation}
\end{proposition}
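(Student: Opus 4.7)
The plan is to verify commutativity at the level of the dressing operator: since $\phi = 1 + O(\d^{-1})$ is invertible and $L = \phi \star \d \star \phi^{-1}$, $M = \phi \star \Gamma \star \phi^{-1}$, showing $[\d_{t_{m,l}^*}, \d_{t_n}]\phi = 0$ will transfer commutativity to $L$, $M$, and every derived quantity of the hierarchy.

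First I would compute $\d_{t_n}\d_{t_{m,l}^*}\phi$ and $\d_{t_{m,l}^*}\d_{t_n}\phi$ by applying the two Sato-type equations $\d_{t_n}\phi = -L^n_- \star \phi$ and $\d_{t_{m,l}^*}\phi = -(C_{m,l})_- \star \phi$ together with the Leibniz rule for $\star$. Subtracting and recognising that the cross terms assemble into a $\star$-commutator, I expect to arrive at
\begin{equation*}
[\d_{t_{m,l}^*},\d_{t_n}]\phi = \Big(\d_{t_n}(C_{m,l})_- - \d_{t_{m,l}^*}(L^n_-) + [L^n_-, (C_{m,l})_-]_\star\Big) \star \phi,
\end{equation*}
so the task reduces to showing that the pseudo-differential operator in parentheses is identically zero.

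For that I would invoke Proposition \ref{kpGDg}, namely $\d_{t_n} C_{m,l} = [L^n_+, C_{m,l}]_\star$ and $\d_{t_{m,l}^*} L^n = -[(C_{m,l})_-, L^n]_\star$. Projecting each identity onto the negative part, splitting $L^n = L^n_+ + L^n_-$ and $C_{m,l} = (C_{m,l})_+ + (C_{m,l})_-$, and using that $[L^n_+, (C_{m,l})_+]_\star$ is purely differential and so contributes nothing to the negative projection, the expression in question collapses to
\begin{equation*}
\big([L^n_+, (C_{m,l})_-]_\star\big)_- + \big([(C_{m,l})_-, L^n_+]_\star\big)_- + [(C_{m,l})_-, L^n_-]_\star + [L^n_-, (C_{m,l})_-]_\star,
\end{equation*}
in which the first two terms cancel and the last two terms cancel, in both cases by antisymmetry of the $\star$-commutator.

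The main obstacle, modest but worth flagging, is confirming that the familiar commutative bookkeeping survives the Moyal deformation: one needs the $\pm$ projections to be compatible with the $\star$-product, in the sense that a $\star$-commutator of two purely differential operators remains purely differential, and that $\d_{t_n}$ and $\d_{t_{m,l}^*}$ both commute with the projections $(\cdot)_\pm$. Both facts follow from the observation that the Moyal $\star$-product preserves the order filtration on the pseudo-differential algebra, so no new ingredient is required beyond the classical KP argument.
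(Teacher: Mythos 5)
Your proposal is correct and follows essentially the same route as the paper: both act with the two flows on the dressing operator $\phi$ via the Sato equations, invoke Proposition \ref{kpGDg} for $\d_{t_{m,l}^*}L^n$ and $\d_{t_n}C_{m,l}$, and then cancel terms using $([L^n_+,(C_{m,l})_+]_\star)_-=0$ together with antisymmetry of the $\star$-commutator and the fact that the negative projection of a product of two negative-order operators is the whole thing. Your closing remark that the Moyal product preserves the order filtration, so the $\pm$ projections behave as in the commutative case, is exactly the (implicit) justification the paper relies on.
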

\begin{proof}

Let the additional flows $\d_{t_{m,l}^{*}}$ and the  flows $\d_{t_{2n+1}}$ act on $\phi$,  we can get
\begin{eqnarray*}
[\d_{t_{m,l}^{*}},\d_{t_{n}}]_{\star}\phi&=&\d_{t_{m,l}^{*}}(\d_{t_{n}}\phi)-\d_{t_{n}}(\d_{t_{m,l}^{*}}\phi)\\
&=&-\d_{t_{m,l}^{*}}(L_{-}^{n}\star\phi)+\d_{t_{n}}((C_{m,l})_{-}\star\phi)\\
&=&-(\d_{t_{m,l}^{*}}L^{n})_{-}\star\phi-(L^{n})_{-}\star(\d_{t_{m,l}^{*}}\phi)\\
&&+(\d_{t_{n}}C_{m,l})_{-}\star\phi+(C_{m,l})_{-}\star(\d_{t_{n}}\phi)\\
&=&([(C_{m,l})_{-},L^{n}]_{\star})_{-}\star\phi+(L^{n})_{-}\star(C_{m,l})_{-}\star\phi\\
&&+([(L^{n})_{+},C_{m,l}]_{\star})_{-}\star\phi-(C_{m,l})_{-}\star(L^{n})_{-}\star\phi\\
&=&([(C_{m,l})_{-},L^{n}]_{\star})_{-}\star\phi-([(C_{m,l})_{-},L_{+}^{n}]_{\star})_{-}\star\phi+[L_{-}^{n},(C_{m,l})_{-}]_{\star}\star\phi\\
&=&([(C_{m,l})_{-},L_{-}^{n}]_{\star})_{-}\star\phi+[L_{-}^{n},(C_{m,l})_{-}]_{\star}\star\phi\\
&=&0.
\end{eqnarray*}
In the above proof,  $([L_{+}^{n},(C_{m,l})_{+}]_{\star})_{-}=0$ and  $([L_{+}^{n},C_{m,l}]_{\star})_{-}=([L_{+}^{n},(C_{m,l})_{-}]_{\star})_{-}$ have been used.
\end{proof}
Therefore, the additional flows $\d_{t_{m,l}^{*}}$ are symmetries of the noncommutative  KP hierarchy.
\begin{proposition}
The additional symmetry flows $\d_{t_{m,l}^{*}}$ of the noncommutative  KP hierarchy form a centerless $W_{1+\infty}$ algebra.
\end{proposition}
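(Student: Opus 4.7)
The plan is to compute $[\d_{t_{m,l}^*},\d_{t_{n,k}^*}]\phi$ from the additional Sato equation \eqref{kpGDf} together with the identities collected in Proposition~\ref{kpGDg}, and then match the resulting expression with the commutation relations of the centerless $W_{1+\infty}$ algebra.

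First, applying $\d_{t_{m,l}^*}$ to $\d_{t_{n,k}^*}\phi=-(C_{n,k})_-\star\phi$ and using $\d_{t_{m,l}^*}C_{n,k}=-[(C_{m,l})_-,C_{n,k}]_\star$ from Proposition~\ref{kpGDg}, antisymmetrisation in $(m,l)\leftrightarrow(n,k)$ would produce
\[
[\d_{t_{m,l}^*},\d_{t_{n,k}^*}]\phi &=& \bigl([(C_{m,l})_-,C_{n,k}]_\star+[C_{m,l},(C_{n,k})_-]_\star\bigr)_-\star\phi \nonumber\\
&& {}-[(C_{m,l})_-,(C_{n,k})_-]_\star\star\phi.
\]
Writing $C_{m,l}=(C_{m,l})_++(C_{m,l})_-$ and the analogous splitting for $C_{n,k}$, and using that a $\star$-bracket of two purely differential operators is again purely differential (so $\bigl([(C_{m,l})_+,(C_{n,k})_+]_\star\bigr)_-=0$), the right-hand side collapses to the single term
\[
[\d_{t_{m,l}^*},\d_{t_{n,k}^*}]\phi=\bigl([C_{m,l},C_{n,k}]_\star\bigr)_-\star\phi.
\]

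Next, I would invoke the dressing formula $C_{m,l}=\phi\star\Gamma^{m}\star\d^{l}\star\phi^{-1}$, which turns the computation of $[C_{m,l},C_{n,k}]_\star$ into a computation at the undressed level,
\[
[C_{m,l},C_{n,k}]_\star=\phi\star\bigl[\Gamma^{m}\star\d^{l},\Gamma^{n}\star\d^{k}\bigr]_\star\star\phi^{-1}.
\]
The basic relation $[L,M]_\star=1$ pulls back to $[\d,\Gamma]_\star=1$ under the dressing, so the associative algebra generated by $\d$ and $\Gamma$ under $\star$ is of Weyl type, formally identical to the algebra of polynomial differential operators on the line. A finite induction on the exponents using only $[\d,\Gamma]_\star=1$ then expresses
\[
\bigl[\Gamma^{m}\star\d^{l},\Gamma^{n}\star\d^{k}\bigr]_\star=\sum_{p,q}f^{p,q}_{m,l;n,k}\,\Gamma^{p}\star\d^{q},
\]
where $f^{p,q}_{m,l;n,k}$ are precisely the structure constants of the centerless $W_{1+\infty}$ algebra. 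Dressing back up and using \eqref{kpGDf} once more yields
\[
[\d_{t_{m,l}^*},\d_{t_{n,k}^*}]=-\sum_{p,q}f^{p,q}_{m,l;n,k}\,\d_{t_{p,q}^*},
\]
the desired $W_{1+\infty}$ commutation relation, manifestly without any central term.

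The step I expect to require the most care is the first one: tracking the four cross $(\pm,\mp)$-commutators and verifying that the Moyal corrections produce no spurious negative-order contributions when one brackets two purely differential operators. Once this is settled, the remainder reduces to the purely algebraic Weyl-algebra identity $[\d,\Gamma]_\star=1$, which is the same as in the commutative case, so no central extension can arise.
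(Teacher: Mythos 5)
Your proposal is correct and follows essentially the same route as the paper: reduce everything to the closure of the $\star$-brackets $[C_{m,l},C_{n,k}]_\star$ on the $W_{1+\infty}$ structure constants via the dressed Weyl-algebra relation, project to negative parts, and identify the result with a linear combination of additional flows acting on $\phi$. Your explicit derivation of $[\d_{t_{m,l}^*},\d_{t_{n,k}^*}]\phi=([C_{m,l},C_{n,k}]_\star)_-\star\phi$ supplies the step the paper states without computation, and your overall minus sign is just the paper's convention absorbed into the definition of $\textbf{C}_{nk,ml}^{pq}$ in eq.~(\ref{kpGDk}).
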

\begin{proof}. By a direct calculation, we can easily derive
\begin{equation}
[C_{m,l},C_{n,k}]_{\star}=\sum\limits_{p,q}\textbf{C}_{nk,ml}^{pq}C_{p,q},
\end{equation}
because the classical structure
\begin{equation}
[z^l\partial_m,z^k\partial_n]_{\star}=\sum\limits_{p,q}\textbf{C}_{nk,ml}^{pq}z^q\partial_p.
\end{equation}
 This further implies that
\begin{equation}\label{kpGDk}
([C_{m,l},C_{n,k}]_{\star})_{-}=-\sum\limits_{p,q}\textbf{C}_{nk,ml}^{pq}(C_{p,q})_{-},
\end{equation}
where the coefficient $\textbf{C}_{nk,ml}^{pq}$ is the standard coefficient of the W algebra \cite{C}.

Using eq.(\ref{kpGDk}), we can get
\begin{eqnarray*}
[\d_{t_{m,l}^{*}},\d_{t_{n,k}^{*}}]_{\star}\phi=-\sum\limits_{p,q}\textbf{C}_{nk,ml}^{pq}(C_{p,q})_{-}\star \phi=\sum\limits_{p,q}\textbf{C}_{nk,ml}^{pq}\d_{t_{p,q}^{*}}\phi,
\end{eqnarray*}
which is equal to
\begin{eqnarray*}
[\d_{t_{m,l}^{*}},\d_{t_{n,k}^{*}}]_{\star}=\sum\limits_{p,q}\textbf{C}_{nk,ml}^{pq}\d_{t_{p,q}^{*}}.
\end{eqnarray*}
\end{proof}

To have a better understanding of the additional symmetry flows in noncommutative case, we give a typical example of the noncommutative   KP hierarchy.
Let $m=1$, the corresponding flow on $L$ is
\begin{equation}\label{kpGDq}
\frac{\d L}{\d t_{1,1}^{*}}=-[(M\star L)_{-},L]_{\star}=L+[(M\star L)_{+},L]_{\star}.
\end{equation}
 Using eq.(\ref{kpGDm}), $ M\star L$ is expressed by
\begin{equation}\label{kpGDn}
 M\star L=\phi\star  x\d\star  \phi^{-1}+\sum\limits_{i}i\phi\star t_{i}\star \phi^{-1}\star \phi\d^{i}\star \phi^{-1}.
\end{equation}
Furthermore, using $\d x=x\d+1$ and $\d^{-i}x=x\d^{-i}-i\d^{-i-1}$, we get
\begin{equation}\label{kpGDo}
( \phi \star x\d\star  \phi^{-1})_{+}=x\d+\omega_{1}\star x-x\star\omega_{1},
\end{equation}
with the $ \phi^{-1}=1-\omega_{1}\d^{-1}+...$ being used. Taking eq.(\ref{kpGDo}) into eq.(\ref{kpGDn}), we have
\begin{equation}\label{kpGDp}
( M\star L)_{+}=x\d+\omega_{1}\star x-x\star\omega_{1}+\sum\limits_{i}i(\phi\star t_{i}\star \phi^{-1}\star L^{i})_{+}.
\end{equation}
Taking eq.(\ref{kpGDp}) into eq.(\ref{kpGDq}), we have
\begin{eqnarray*}
\frac{\d L}{\d t_{1,1}^{*}}&=&L+[x\d,L]_{\star}+[\sum\limits_{i}i(\phi\star t_{i}\star \phi^{-1}\star L^{i})_{+},L]_{\star}\\
&&+[\omega_{1}\star x,L]_{\star}-[x\star\omega_{1},L]_{\star}.
\end{eqnarray*}

Next, we define one generating function of the additional symmetries of the noncommutative  KP hierarchy. Firstly, we  define a generating operator $Y(\lambda,\mu)$ of the additional symmetries as
\begin{eqnarray*}
Y(\lambda,\mu)&=&\sum\limits_{m=0}^{\infty}\frac{(\mu-\lambda)^{m}}{m!}\sum\limits_{l=-\infty}^{\infty}\lambda^{-l-m-1}(C_{m,m+l})_{-},
\end{eqnarray*}
which can be expressed by a simple form in the following proposition.
\begin{proposition}The following identity holds true
\begin{equation}
Y(\lambda,\mu)=\frac{1}{\lambda}\omega(t,-\lambda)\d^{-1}\star\omega^*(t,\mu).
\end{equation}
\end{proposition}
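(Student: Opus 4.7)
My plan is to adapt the classical Adler--Shiota--van Moerbeke generating-symmetry identity of the KP hierarchy to the Moyal setting. First I would substitute the dressing form $C_{m,l}=\phi\star\Gamma^m\star\d^l\star\phi^{-1}$ (established in the proof of Proposition~\ref{kpGDg}) into the definition of $Y(\lambda,\mu)$. Reindexing the inner sum by $k=m+l$ so that $\lambda^{-l-m-1}=\lambda^{-k-1}$, and pulling $\phi,\phi^{-1}$ outside the $(\cdot)_-$ projection by linearity, gives
\[
Y(\lambda,\mu)=\left(\phi\star e^{(\mu-\lambda)\Gamma}\star\Big(\sum_{k\in\Z}\lambda^{-k-1}\d^k\Big)\star\phi^{-1}\right)_-,
\]
where $e^{(\mu-\lambda)\Gamma}:=\sum_{m\ge 0}\frac{(\mu-\lambda)^m}{m!}\Gamma^m$ is the formal exponential of $\Gamma$.

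Next I would identify the inner kernel operator. Using $\Gamma\star e^{\xi(t,z)}=\d_z e^{\xi(t,z)}$, Taylor's formula yields the shift identity $e^{(\mu-\lambda)\Gamma}\star e^{\xi(t,\lambda)}=e^{\xi(t,\mu)}$. The bilateral formal sum $\sum_{k\in\Z}\lambda^{-k-1}\d^k$ is the classical delta-type Cauchy kernel whose negative-order projection factors through $\d^{-1}$ flanked by compensating exponential boundaries. Combining the Taylor shift with this factorization collapses the inner formal operator into $\tfrac{1}{\lambda}e^{\xi(t,-\lambda)}\star\d^{-1}\star e^{-\xi(t,\mu)}$ modulo a differential remainder that is annihilated by the outer $(\cdot)_-$ projection.

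Finally I would dress back by $\phi$: the left factor $\phi\star e^{\xi(t,-\lambda)}$ is by definition the wave function $\omega(t,-\lambda)$, while $e^{-\xi(t,\mu)}\star\phi^{-1}$, rearranged via the involution that exchanges $\phi^{-1}$ and $(\phi^*)^{-1}$ inside the rank-one $\d^{-1}$ kernel, becomes the adjoint wave function $\omega^*(t,\mu)=(\phi^*)^{-1}\star e^{-\xi(t,\mu)}$. Since the result is already of negative order, the outer $(\cdot)_-$ acts trivially, producing the claimed identity.

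The main obstacle is the Cauchy-type factorization of $\sum_{k\in\Z}\lambda^{-k-1}\d^k$ in the Moyal setting. Rigorously this bilateral sum must be interpreted as a formal delta kernel whose negative-order projection admits a closed factorization through $\d^{-1}$ with compensating exponential factors; transplanting this factorization to the noncommutative setting requires checking that the $\star$-product corrections do not disturb the positive/negative order separation, and that the sign conventions consistently produce the argument $-\lambda$ in $\omega(t,-\lambda)$ (rather than $+\lambda$) via the placement of $(\phi^*)^{-1}$ inside the rank-one kernel. This is where the noncommutative deformation genuinely enters and where essentially all of the substantive verification work lies; the surrounding algebraic structure is otherwise parallel to the commutative case.
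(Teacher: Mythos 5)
Your overall strategy (reindexing, the Taylor shift $e^{(\mu-\lambda)\Gamma}$ turning the $m$-sum into a shift of the spectral parameter, and the delta-kernel role of $\sum_{k}\lambda^{-k-1}\d^{k}$) matches the skeleton of the paper's computation, but the step that carries all of the mathematical content is asserted rather than proved, and one of the assertions supporting it is false as stated. You claim the dressed bilateral kernel collapses to $\tfrac{1}{\lambda}e^{\xi(t,-\lambda)}\star\d^{-1}\star e^{-\xi(t,\mu)}$ ``modulo a differential remainder that is annihilated by the outer $(\cdot)_-$ projection.'' A differential remainder $R$ sitting inside $\phi\star e^{(\mu-\lambda)\Gamma}\star R\star\phi^{-1}$ is \emph{not} annihilated by the outer projection: multiplication by pseudo-differential operators does not preserve differential operators (already $(\phi\star\d\star\phi^{-1})_{-}=L_{-}\neq 0$). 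If instead you mean that the \emph{entire dressed expression} differs from the rank-one operator $\tfrac{1}{\lambda}\omega(t,-\lambda)\d^{-1}\star\omega^{*}(t,\mu)$ by a differential operator, that is a coherent reading, but that statement is essentially the proposition itself, so nothing has been reduced.

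The missing ingredient, which is exactly what the paper supplies, is the residue formula for the negative part of a dressed operator,
\begin{equation*}
(M^{m}\star L^{m+l})_{-}=\res_{z}\bigl[z^{m+l}\star(\d_{z}^{m}\omega(t,z))\,\d^{-1}\star\omega^{*}(t,z)\bigr],
\end{equation*}
obtained from the expansion $P_{-}=\sum_{i\ge 1}\d^{-i}\res(\d^{i-1}P)$ together with the commutation identity $f\d^{-1}=\d^{-1}f+\d^{-1}f_{x}\d^{-1}$; it is precisely this lemma that makes the adjoint wave function $\omega^{*}=(\phi^{*})^{-1}\star e^{-\xi}$ appear on the right of $\d^{-1}$ (the ``involution exchanging $\phi^{-1}$ and $(\phi^{*})^{-1}$'' that you invoke but do not justify). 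Once this identity is in hand, the $m$- and $l$-sums are performed under the residue exactly as you describe, the $l$-sum producing the formal delta that evaluates the spectral variable. Without this lemma --- or an equivalent rigorous treatment of your Cauchy-kernel factorization in the $\star$-product setting, which you yourself flag as ``where essentially all of the substantive verification work lies'' --- the proof is incomplete at its central step.
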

\begin{proof}. Basing on eq.(\ref{kpGDf}), and the above lemma, we get
\begin{eqnarray*}
(M^{m}\star L^{m+l})_{-}&=&\sum\limits_{i=1}^{\infty}\d^{-i}res_{z}[z^{-1}\star(\d^{i-1}(M^{m}\star W\d^{m+l+1}\star e^{\xi}))\star(W^{-1}\star e^{-\xi})^{*}],\\
&=&\sum\limits_{i=1}^{\infty}res_{z}[z^{m+l}\d^{-i}(M^{m}\star\omega)\partial^{i-1}\star\omega^{*}(t,z)],\\
&=&res_{z}[z^{m+l}\star(\d_{z}^{m}\omega)\d^{-1}\star\omega^{*}(t,z)],
\end{eqnarray*}
with the help of the identity $f\d^{-1}=\d^{-1}f+\d^{-1}f_{x}\d^{-1}$.\\

Thus, we have one generating function of the noncommutative  KP hierarchy,
\begin{eqnarray*}
Y(\lambda,\mu)&=&\sum\limits_{m=0}^{\infty}\frac{(\mu-\lambda)^{m}}{m!}\star \sum\limits_{l=-\infty}^{\infty}\lambda^{-l-m-1}\star res_{z}[z^{l+m}\star(\d_{z}^{m}\omega(t,z))\d^{-1}\omega^{*}(t,z)]\\
&=&res_{z}[\sum\limits_{n=-\infty}^{+\infty}\frac{z^{n}}{\lambda^{n+1}}\star\omega(t,z+\mu-\lambda)\star\d^{-1}\omega^{*}(t,z)]\\
&=&\frac{1}{\lambda}[\omega(t,-\lambda)\d^{-1}\star\omega^{*}(t,\mu)].
\end{eqnarray*}
\end{proof}
\subsection{Noncommutative   KP  hierarchy with self-consistent sources}

The noncommutative   KP  hierarchy with self-consistent sources can be constructed by taking derivatives with respect to a new time variable $y_n$
\[L_{y_n}=[B_n+P(t)\d^{-1}\star Q(t),L]_{\star},\ \]
where
\[\d_{t_{n}}P=B_{n}\star P,\ \ \d_{t_{n}}Q=-B_{n}^{*}\star Q.\]
Here $P,Q$ take specific values of  $\omega(t,-\lambda),\omega^{*}(t,\mu)$ respectively.
And the corresponding Sato equation becomes
\[\phi_{y_n}=(-L^{n}_-+P(t)\d^{-1}\star Q(t))\star \phi.\]

\subsection{Constrained noncommutative  KP hierarchy}

The constrained noncommutative KP  hierarchy \cite{sole} is constructed by the following pseudo-differential Lax
operator $\bar L=\d+q(t)\d^{-1}\star r(t)$
\[\bar L_{t_n}=[\bar B_n,\bar L]_{\star},\ \ \bar B_n=\bar L^n_+,\]
where
\[\d_{t_{n}}q=\bar B_{n}\star q,\ \ \d_{t_{n}}r=-\bar B_{n}^{*}\star r.\]

 In order to get the explicit form of the flow equations, we need the operator $\bar B_{n}$,
\begin{eqnarray*}
\bar B_1&=&\partial,\\
\bar B_2&=&\d^2+2(q\star r),\\
\bar B_3&=&\d^3+3q\star r\star\d+3q_x\star r,\\
\dots\ \  &\dots& \ \ \ \dots.
\end{eqnarray*}
Then we can get  the first few flows of the
noncommutative cKP hierarchy
\begin{equation}\label{t1flow}
\begin{cases}
q_{t_1}=q_x\\
r_{t_1}=r_x,\\
\end{cases}
\end{equation}

\begin{equation}\label{t2flow}
\begin{cases}
q_{t_2}=q_{xx}+2(q\star r\star q)\\
r_{t_2}=-r_{xx}-2(r\star q\star r),\\
\end{cases}
\end{equation}

\begin{equation}\label{t3flow}
\begin{cases}
q_{t_3}&=q_{xxx}+3q_x\star r\star q+3q\star r\star q_x\\
&+q\star r_x\star q+r\star q_x\star q\\
r_{t_3}&=r_{xxx}+3r_x\star q\star r+3r\star q\star r_x,\\
\end{cases}
\end{equation}
\begin{equation*}
\dots\ \  \dots \ \ \ \dots.
\end{equation*}

\sectionnew{Noncommutative  Gelfand-Dickey hierarchies}
The  Gelfand-Dickey  hierarchy
  is one of the most important topics  in the
area of classical integrable systems.
The noncommutative Gelfand-Dickey  hierarchy can be constructed by the pseudo-differential
operator $\L=\d^N+u_2\d^{N-2}+u_3\d^{N-3}+...u_0$ like this:
\[\L_{t_n}=[\B_n,\L]_{\star}:= \B_n\star \L-\L\star \B_n,\ \ n\neq 0\  mod\  N,\]
 where $\B_n=(\L^{\frac nN})_+$ and ``+" means the  projection on nonnegative powers of $\partial$.

We can give the noncommutative  Gelfand-Dickey hierarchies by the consistent conditions of the following set of linear partial differential equations
\begin{equation}
\L\star\hat\omega(t,\lambda)=\lambda\star\hat\omega(t,\lambda),\ \ \frac{\d \hat\omega(t,\lambda)}{\d t_{n}}=\B_{n}\star\hat\omega(t,\lambda),\ \ t=(t_{1}, t_{2}, ...,t_{iN-1},t_{iN+1},...).
\end{equation}
Here, $W(t,\lambda)$ is defined as a wave operator, and let $W=1+\sum\limits_{i=1}^{\infty}\alpha_{i}\star\d^{-i}$ be the wave operator of the noncommutative  Gelfand-Dickey hierarchies. The Lax operator and the wave function can be represented as
\begin{equation}\label{e}
\L=W\star\d\star W^{-1},\ \ \hat\omega(t,\lambda)=W(t)\star e^{\xi_g(t,\lambda)},
\end{equation}
in which $\xi_g(t,\lambda)=\lambda\star t_{1}+\lambda^{2}\star t_{2}+...+\lambda^{iN-1}\star t_{iN-1}+\lambda^{iN+1}\star t_{iN+1}+...$. \\
The Lax equation is equivalent to the following Sato equation
\begin{equation}\label{h}
\frac{\d W}{\d t_{n}}=-\L_{-}^{\frac nN}\star W.
\end{equation}
When $N=2$, one can derive the noncommutative KdV hierarchy which contains the following noncommutative KdV equation
\[u_{t_3}=\frac14u_{xxx}+\frac34(u_x\star u+u\star u_x),\]
and the following noncommutative fifth order KdV equation
\[u_{t_5}=\frac1{16}u_{xxxxx}+\frac5{16}(u_{xxx}\star u+u\star u_{xxx})+\frac 58(u_x\star u_x+u\star u\star u)_x.\]
When $N=3$, one can derive the noncommutative Boussinesq hierarchy which contains the following noncommutative Boussinesq equation
\[u_{t_2,t_2}=\frac13u_{xxx}+(u\star u)_{xx}+([u,\d^{-1} u_{t_2}]_{\star})_x.\]

\subsection{Odd noncommutative C type Gelfand-Dickey hierarchies}

The Odd noncommutative C type Gelfand-Dickey  hierarchy can be constructed by the differential
operator
\[\notag\L=&&\d^{2N+1}+(\d^{N}u_1\d^{N-1}+\d^{N-1}u_1\d^{N})+(\d^{N-1}u_2\d^{N-2}+\d^{N-2}u_2\d^{N-1})\\
&&+...+(\d u_N +u_N\d),\]
which satisfies
\[\L^*=-\L.\]
The Lax equation of the Odd noncommutative C type  Gelfand-Dickey hierarchy is as
\[\L_{t_n}=[\B_n,\L]_{\star}:= \B_n\star \L-\L\star \B_n,\ \ n\neq 0\  mod\  2N+1,\]
 where $\B_n=(\L^{\frac n{2N+1}})_+$ and ``+" means the nonnegative projection on powers of $\partial$.
While, we can give the Odd noncommutative C type  Gelfand-Dickey hierarchies by the consistent conditions of the following set of linear partial differential equations
\begin{equation}
\L\star\omega_g(t,\lambda)=\lambda\star\omega_g(t,\lambda),\ \ \frac{\d \omega_g(t,\lambda)}{\d t_{n}}=\B_{n}\star\omega_g(t,\lambda),
\end{equation}
where
$t=(t_{1}, t_{2}, ...,t_{i},...),\ i\neq 0\  mod\  2N+1.$
Here, $W(t,\lambda)$ is defined as a wave function, and let $W=1+\sum\limits_{i=1}^{\infty}\alpha_{i}\star\d^{-i}$ be the wave operator of the noncommutative  Gelfand-Dickey hierarchies. The Lax operator and the wave function can be represented as
\begin{equation}\label{e}
\L=W\star\d\star W^{-1},\ \ \omega_g(t,\lambda)=W(t)\star e^{\xi_g(t,\lambda)},
\end{equation}
in which $\xi_g(t,\lambda)=\lambda\star t_{1}+\lambda^{2}\star t_{2}+...+\lambda^{i(2N+1)-1}\star t_{i(2N+1)-1}+\lambda^{i(2N+1)+1}\star t_{i(2N+1)+1}+...$. \\
The Lax equation is equivalent to the following Sato equation
\begin{equation}\label{h}
\frac{\d W}{\d t_{n}}=-\L_{-}^{\frac nN}\star W.
\end{equation}

\subsection{Even noncommutative C type Gelfand-Dickey hierarchies}

The noncommutative Gelfand-Dickey  hierarchy can be constructed by the differential
operator
\[\notag\L=\d^{2N}+\d^{N-1}u_1\d^{N-1}+...+\d u_{N-1}\d+u_N,\]
which satisfies
\[\L^*=\L.\]
The Lax equation of the Odd noncommutative C type  Gelfand-Dickey hierarchy is as
\[\L_{t_n}=[\B_n,\L]_{\star}:= \B_n\star \L-\L\star \B_n,\ \ n\neq 0\  mod\  2N,\]
 where $\B_n=(\L^{\frac n{2N}})_+$ and ``+" means the nonnegative projection on powers of $\partial$.

\sectionnew{Additional symmetry of noncommutative  Gelfand-Dickey hierarchies}
Firstly, we define the operator $\Gamma_g$ and the Orlov-Shulman's operator $\M$ as
\begin{equation}\label{GDm}
\M=W\star\Gamma_g\star W^{-1},\ \ \Gamma_g=\sum\limits_{j=0}^{N-1}\sum\limits_{i=0}^{\infty}(Ni+j)t_{Ni+j}\d^{Ni+j}.
\end{equation}
Meanwhile, the Orlov-Shulman's operator $\M$ satisfy the following identities
\begin{equation}
[\L,\M]_{\star}=1,\ \ \d_{t_{n}}\M=[\B_{n},\M]_{\star},\ \ \M\star\hat\omega(t,z)=\d_{z}\hat\omega(t,z).
\end{equation}
Further, we can get
\begin{equation}
\frac{\d \M^{m}}{\d t_{n}}=[\B_{n},\M^{m}]_{\star},\ \ \frac{\d \M^{m}\L^{l}}{\d t_{n}}=[\B_{n},\M^{m}\L^{l}]_{\star}.
\end{equation}
Moreover, to the wave function $\hat\omega(t,z)$, $(\L,\M)$ is anti-isomorphic to $(z,\d_{z})$ with $[z,\d_{z}]_{\star}=-1$. We get
\begin{equation}
\M^{m}\star \L^{l}\star\hat\omega(t,z)=z^{l}\star(\d_{z}^{m}\hat\omega(t,z)),\ \  \L^{l}\star \M^{m}\star\hat\omega(t,z)=\d_{z}^{m}(z^{l}\star\hat\omega(t,z)).
\end{equation}
Next, we should consider the adjoint wave function $\hat\omega^{*}$ and the adjoint Orlov-Shulman's operator $\M^{*}$ that are useful for constructing the additional symmetry of the noncommutative Gelfand-Dickey hierarchies, we have
\begin{equation}
\hat\omega^{*}(t,z)=(W^{*})^{-1}\star e^{-\xi_g(t,z)}.
\end{equation}

However, $\L^{*}$ and $\M^{*}$ satisfy $[\L^{*},\M^{*}]_{\star}=[\M,\L]_{\star}^{*}=-1$.
Furthermore, we have
\begin{equation}
\L^{*}\star\hat\omega^{*}=z\star\hat \omega^{*},  \d_{t_{n}}\hat\omega^{*}=-\B_{n}^{*}\star\hat\omega^{*}.
\end{equation}
Next, we give the additional symmetries of the noncommutative Gelfand-Dickey hierarchies. Firstly, we introduce additional independent variables $t_{m,l}^{*}$ and define the action of the additional flows on the wave operator as
\begin{equation}\label{GDf}
\frac{\d W}{\d t_{m,l}^{*}}=-(D_{m,l})_{-}\star W,
\end{equation}
in which \begin{equation}\label{GDl}
D_{m,l}=\M^{m}\star \L^{l}.
\end{equation}
\begin{remark}
For the C type noncommutative Gelfand-Dickey hierarchy, to keep the C type condition the operator $D_{m,l}$ needs to take the following form
\begin{equation}\label{CGDl}
D_{m,l}=\M^{m}\star \L^{l}-(-1)^{l}\L^{l}\star \M^{m}.
\end{equation}
\end{remark}
In addition, we give some useful identities in the following proposition.
\begin{proposition}\label{GDi}The following identities hold true
\begin{equation}
\frac{\d \L^{n}}{\d t_{m,l}^{*}}=-[(D_{m,l})_{-},\L^{n}]_{\star}, \frac{\d \M^{m}}{\d t_{m,l}^{*}}=-[(D_{m,l})_{-},\M^{m}]_{\star},
\end{equation}
\begin{equation}
\frac{\d D_{n,k}}{\d t_{m,l}^{*}}=-[(D_{m,l})_{-},D_{n,k}]_{\star}, \frac{\d D_{n,k}}{\d t_{n}}=[\B_{n},D_{n,k}]_{\star}.
\end{equation}
\end{proposition}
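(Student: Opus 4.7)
The plan is to mirror the proof of Proposition~\ref{kpGDg} by leveraging the Gelfand-Dickey dressing structures. The key observation is that the three operators involved admit the dressing forms
$$\L^{n}=W\star\d^{n}\star W^{-1},\qquad \M^{m}=W\star\Gamma_{g}^{m}\star W^{-1},\qquad D_{n,k}=W\star\Gamma_{g}^{n}\star\d^{k}\star W^{-1},$$
where both $\d$ and $\Gamma_{g}=\sum_{j}\sum_{i}(Ni+j)t_{Ni+j}\d^{Ni+j}$ depend only on the ordinary hierarchy times $t_{Ni+j}$, so that $\d_{t_{m,l}^{*}}\Gamma_{g}=0=\d_{t_{m,l}^{*}}\d^{n}$. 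Combined with the additional Sato equation \eqref{GDf}, this reduces every identity in the proposition to a mechanical differentiation of a dressing.

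For the first identity I would differentiate $\L^{n}=W\star\d^{n}\star W^{-1}$ in $t_{m,l}^{*}$. Using $\d_{t_{m,l}^{*}}(W^{-1})=-W^{-1}\star(\d_{t_{m,l}^{*}}W)\star W^{-1}=W^{-1}\star(D_{m,l})_{-}$, the product rule gives
$$\d_{t_{m,l}^{*}}\L^{n}=-(D_{m,l})_{-}\star W\star\d^{n}\star W^{-1}+W\star\d^{n}\star W^{-1}\star(D_{m,l})_{-}=-[(D_{m,l})_{-},\L^{n}]_{\star}.$$
The identical argument with $\d^{n}$ replaced by $\Gamma_{g}^{m}$ proves the $\M^{m}$ identity, and the same template applied to the combined dressing $W\star\Gamma_{g}^{n}\star\d^{k}\star W^{-1}$ yields $\d_{t_{m,l}^{*}}D_{n,k}=-[(D_{m,l})_{-},D_{n,k}]_{\star}$, since both middle factors are inert under $\d_{t_{m,l}^{*}}$.

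The last identity, $\d_{t_{n}}D_{n,k}=[\B_{n},D_{n,k}]_{\star}$ (read as a Lax-type relation under the ordinary hierarchy times, the repeated subscript $n$ being a typographical artifact), follows by combining the standard Sato equation $\d_{t_{n}}W=-\L_{-}^{n/N}\star W$ with the already recorded relations $\d_{t_{n}}\L=[\B_{n},\L]_{\star}$ and $\d_{t_{n}}\M=[\B_{n},\M]_{\star}$, which propagate through the product $\M^{n}\star\L^{k}$ by the derivation property of the star commutator.

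The main obstacle is purely bookkeeping: one must verify that the usual formulas $(AB)^{-1}=B^{-1}\star A^{-1}$ and $\d(W^{-1})=-W^{-1}\star(\d W)\star W^{-1}$ transfer unchanged to the Moyal setting, which they do because $\star$ is associative and bilinear. No genuinely new analytic input is required beyond the dressing framework already established for $W$ and $\M$ in Section~5.
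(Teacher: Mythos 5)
Your proof is correct and follows essentially the same route as the paper: differentiate the dressing forms $\L^{n}=W\star\d^{n}\star W^{-1}$, $\M^{m}=W\star\Gamma_{g}^{m}\star W^{-1}$ and the dressed $D_{n,k}$ using the additional Sato equation \eqref{GDf} together with the fact that $\Gamma_{g}$ and $\d$ depend only on the ordinary times and are inert under $\d_{t_{m,l}^{*}}$. The only cosmetic difference is that the paper writes the $D_{n,k}$ dressing in the C-type symmetrized form $W\star(\Gamma_{g}^{n}\star\d^{k}-(-1)^{k}\d^{k}\star\Gamma_{g}^{n})\star W^{-1}$ so as to also cover the variant of the Remark, but the argument is unchanged since that middle factor is equally inert.
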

\begin{proof}
The proof is very classical. We only need to consider the corresponding dressing structures
\begin{equation}
\L^n= W \star\d^n\star W ^{-1},\ \ \M^m= W \star\Gamma_g^m\star W ^{-1},
\end{equation}
and
\begin{equation}
D_{n,k}= W \star (\Gamma_g^{n}\star \partial^{k}-(-1)^{k}\partial^{k}\star\Gamma_g^{n})\star W ^{-1},\
\end{equation}
a direct calculation will lead to the proposition using the additional Sato equation.
We need to note that $\Gamma_g$ depends on $t_n$ and does not depend on $ t_{m,l}^{*}$.
\end{proof}

\begin{proposition}
The additional flows $\d_{t_{m,l}^{*}}$ commute with the flows $\d_{t_{n}}$ of the noncommutative Gelfand-Dickey hierarchies, which can be shown as
\begin{equation}
[\d_{t_{m,l}^{*}},\d_{t_{n}}]_{\star}=0,\ n\neq \ 0\ mod\ N.
\end{equation}
\end{proposition}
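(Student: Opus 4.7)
The proof follows the same pattern as the analogous proposition for the noncommutative KP hierarchy, with $L$ replaced by the formal $N$th root $\L^{1/N}$ and $C_{m,l}$ replaced by $D_{m,l}$. I would first act the commutator $[\d_{t_{m,l}^{*}},\d_{t_{n}}]_\star$ on the dressing operator $W$, using the Sato equation $\d_{t_n}W=-\L^{n/N}_{-}\star W$ together with the additional Sato equation \eqref{GDf}. Applying the Leibniz rule then produces four terms, two of which involve the derivatives $\d_{t_{m,l}^{*}}\L^{n/N}$ and $\d_{t_n}D_{m,l}$; these are supplied by Proposition \ref{GDi} as $-[(D_{m,l})_{-},\L^{n/N}]_\star$ and $[\B_n,D_{m,l}]_\star$ respectively. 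The hypothesis $n\not\equiv 0\pmod N$ is precisely what guarantees that $\B_n=(\L^{n/N})_{+}$ is a genuine flow generator of the hierarchy.

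Next I would split $\L^{n/N}=(\L^{n/N})_{+}+\L^{n/N}_{-}$ inside the commutator and cancel the positive-part contribution against $[\B_n,D_{m,l}]_\star=[(\L^{n/N})_{+},D_{m,l}]_\star$, so that the four terms collapse to
\begin{equation*}
\bigl([(D_{m,l})_{-},\L^{n/N}_{-}]_\star\bigr)_{-}\star W + [\L^{n/N}_{-},(D_{m,l})_{-}]_\star\star W.
\end{equation*}
The Moyal commutator of two purely negative pseudo-differential operators is again purely negative, so the $(\cdot)_{-}$ projection in the first term is redundant, and the two remaining summands cancel exactly. This shows $[\d_{t_{m,l}^{*}},\d_{t_n}]_\star W=0$; commutativity of the flows on $\L$ and $\M$ then follows immediately from the dressing relations $\L=W\star\d\star W^{-1}$ and $\M=W\star\Gamma_g\star W^{-1}$.

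The main point that needs a little extra care, and which I expect to be the one genuine obstacle beyond the KP template, is the C type variant flagged in the preceding remark, where $D_{m,l}=\M^m\star\L^l-(-1)^l\L^l\star\M^m$ rather than simply $\M^m\star\L^l$. The argument above should still apply, because only two properties of $D_{m,l}$ are used: the dressing formula recorded in the proof of Proposition \ref{GDi}, and the fact that $(D_{m,l})_{-}$ is purely negative; both survive the symmetrisation. A related minor technicality is that Proposition \ref{GDi} was formally stated for integer exponents of $\L$, but its proof is a pure dressing manipulation and extends verbatim to the fractional exponent $n/N$, since $W\star\d^{n/N}\star W^{-1}$ is a well-defined formal pseudo-differential series.
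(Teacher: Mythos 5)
Your proof is correct and follows essentially the same route as the paper's: act the commutator on $W$, use the two Sato equations together with Proposition \ref{GDi}, split $\L^{\frac nN}$ into its $+$ and $-$ parts to cancel the contribution of $\B_n$, and observe that the two surviving purely negative terms cancel because the projection $(\cdot)_-$ is redundant on a commutator of negative operators. Your additional remarks on the C type symmetrisation and the fractional power are sensible points of care but do not alter the argument.
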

\begin{proof}

Let the additional flows $\d_{t_{m,l}^{*}}$ and the  flows $\d_{t_{n}}$ act on $ W $,  we can get
\begin{eqnarray*}
[\d_{t_{m,l}^{*}},\d_{t_{n}}]_{\star} W &=&\d_{t_{m,l}^{*}}(\d_{t_{n}} W )-\d_{t_{n}}(\d_{t_{m,l}^{*}} W )\\
&=&-\d_{t_{m,l}^{*}}(\L_{-}^{\frac nN}\star W )+\d_{t_{n}}((D_{m,l})_{-}\star W )\\
&=&-(\d_{t_{m,l}^{*}}\L^{\frac nN})_{-}\star W -(\L^{\frac nN})_{-}\star(\d_{t_{m,l}^{*}} W )\\
&&+(\d_{t_{n}}D_{m,l})_{-}\star W +(D_{m,l})_{-}\star(\d_{t_{n}} W )\\
&=&([(D_{m,l})_{-},\L^{\frac nN}]_{\star})_{-}\star W +(\L^{\frac nN})_{-}\star(D_{m,l})_{-}\star W \\
&&+([(\L^{\frac nN})_{+},D_{m,l}]_{\star})_{-}\star W -(D_{m,l})_{-}\star(\L^{\frac nN})_{-}\star W \\
&=&([(D_{m,l})_{-},\L^{\frac nN}]_{\star})_{-}\star W -([(D_{m,l})_{-},\L_{+}^{\frac nN}]_{\star})_{-}\star W +[\L_{-}^{\frac nN},(D_{m,l})_{-}]_{\star}\star W \\
&=&([(D_{m,l})_{-},\L_{-}^{\frac nN}]_{\star})_{-}\star W +[\L_{-}^{\frac nN},(D_{m,l})_{-}]_{\star}\star W \\
&=&0.
\end{eqnarray*}
In the above proof,  $([\L_{+}^{\frac nN},(D_{m,l})_{+}]_{\star})_{-}=0$ and  $([\L_{+}^{\frac nN},D_{m,l}]_{\star})_{-}=([\L_{+}^{\frac nN},(D_{m,l})_{-}]_{\star})_{-}$ have been used.
\end{proof}

Therefore, the additional flows $\d_{t_{m,l}^{*}}$ are symmetries of the noncommutative Gelfand-Dickey hierarchies.
\begin{proposition}
The additional symmetry flows $\d_{t_{m,l}^{*}}$ of the noncommutative Gelfand-Dickey hierarchies form the centerless $W_{1+\infty}$.
\end{proposition}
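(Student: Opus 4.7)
The plan is to imitate verbatim the $W_{1+\infty}$ argument already used for the noncommutative KP hierarchy, since the dressing structure $\L^n = W\star\d^n\star W^{-1}$, $\M^m = W\star\Gamma_g^m\star W^{-1}$ of Proposition~\ref{GDi} is formally identical to the KP dressing with $(\phi,\Gamma)$ replaced by $(W,\Gamma_g)$. First I would establish the $\star$-commutation relations among the operators $D_{m,l}=\M^m\star\L^l$ in the generic Gelfand-Dickey case. Because conjugation by $W$ is an algebra automorphism with respect to $\star$, one has $[D_{m,l},D_{n,k}]_\star = W\star[\Gamma_g^m\star\d^l,\Gamma_g^n\star\d^k]_\star\star W^{-1}$. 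The inner $\star$-bracket obeys the same abstract relation $[z,\d_z]_\star=-1$ as in the KP setting and therefore expands into $\sum_{p,q}\textbf{C}_{nk,ml}^{pq}\,\Gamma_g^p\star\d^q$ with exactly the same $W$-algebra structure constants. Dressing back and then taking the negative projection yields the direct analogue of (\ref{kpGDk}), namely $([D_{m,l},D_{n,k}]_\star)_- = -\sum_{p,q}\textbf{C}_{nk,ml}^{pq}(D_{p,q})_-$.

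Second, I would apply the commutator of flows to $W$ using (\ref{GDf}) together with Proposition~\ref{GDi}. Following the same expansion as in the KP case, the result collapses to $([D_{m,l},D_{n,k}]_\star)_-\star W$ after cancellations among the double-negative products $(D_{n,k})_-\star(D_{m,l})_-\star W$ and the mixed projections $([(D_{m,l})_{\pm},(D_{n,k})_{\mp}]_\star)_-$; here one uses that $([(D_{m,l})_+,(D_{n,k})_+]_\star)_-=0$, as both operands are purely differential. Substituting the identity of the first paragraph and invoking (\ref{GDf}) once more to re-identify $-(D_{p,q})_-\star W = \d_{t_{p,q}^*}W$, I obtain $[\d_{t_{m,l}^*},\d_{t_{n,k}^*}]_\star W = \sum_{p,q}\textbf{C}_{nk,ml}^{pq}\,\d_{t_{p,q}^*}W$. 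Since $W$ is an invertible pseudo-differential operator it can be stripped off on the right, giving the defining relation of the centerless $W_{1+\infty}$ algebra.

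The step that is genuinely new compared with the KP proof is the C-type subcase, in which $D_{m,l}$ is replaced by the $*$-symmetrised expression (\ref{CGDl}) so as to preserve $\L^*=\pm\L$. There one must first check that these symmetrised combinations close under $\star$-commutation, which is a short but careful symbol-level calculation using $[z,\d_z]_\star=-1$ together with the parity factors $(-1)^l$ and $(-1)^k$. The upshot is that the C-type additional flows generate the $*$-invariant subalgebra of $W_{1+\infty}$ rather than the full algebra; apart from this restriction the combinatorics is purely formal and follows the KP template without modification. I therefore anticipate no substantive obstacle beyond bookkeeping of the signs in the symmetrised case.
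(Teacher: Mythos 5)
Your proposal follows essentially the same route as the paper: dress the $\star$-commutator $[D_{m,l},D_{n,k}]_\star=\sum_{p,q}\textbf{C}_{nk,ml}^{pq}D_{p,q}$ through $W$, pass to the negative projection, and evaluate $[\d_{t_{m,l}^*},\d_{t_{n,k}^*}]_\star W$ to read off the structure constants, so the two arguments coincide (the paper merely states the cancellation of the cross terms that you spell out). Your additional treatment of the C-type case, where the symmetrised generators of (\ref{CGDl}) close only into the $*$-invariant subalgebra of $W_{1+\infty}$, is a genuine refinement that the paper's proof silently omits.
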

\begin{proof}. By a direct calculation, we can easily derive
\begin{equation}
[D_{m,l},D_{n,k}]_{\star}=\sum\limits_{p,q}\textbf{C}_{nk,ml}^{pq}D_{p,q},
\end{equation}
and it implies that
\begin{equation}\label{GDk}
([D_{m,l},D_{n,k}]_{\star})_{-}=-\sum\limits_{p,q}\textbf{C}_{nk,ml}^{pq}(D_{p,q})_{-},
\end{equation}
where the coefficient $\textbf{C}_{nk,ml}^{pq}$ is the standard coefficient of the W algebra \cite{C}.

Using eq.(\ref{GDk}), we can get
\begin{eqnarray*}
[\d_{t_{m,l}^{*}},\d_{t_{n,k}^{*}}]_{\star}W=-\sum\limits_{p,q}\textbf{C}_{nk,ml}^{pq}(C_{p,q})_{-}\star W=\sum\limits_{p,q}\textbf{C}_{nk,ml}^{pq}\d_{t_{p,q}^{*}}W,
\end{eqnarray*}
which is equal to
\begin{eqnarray*}
[\d_{t_{m,l}^{*}},\d_{t_{n,k}^{*}}]_{\star}=\sum\limits_{p,q}\textbf{C}_{nk,ml}^{pq}\d_{t_{p,q}^{*}}.
\end{eqnarray*}
\end{proof}

To have a better understanding of the additional symmetry flows in noncommutative case, we give a typical example of the noncommutative  Gelfand-Dickey hierarchies.
Let $m=1$, the corresponding flow on $\L$ is
\begin{equation}\label{GDq}
\frac{\d \L}{\d t_{1,1}^{*}}=-[(\M\star \L)_{-},\L]_{\star}=\L+[(\M\star \L)_{+},\L]_{\star}.
\end{equation}
 Using eq.(\ref{GDm}), $ \M\star \L$ is expressed by
\begin{equation}\label{GDn}
 \M\star \L=W x\d W^{-1}+\sum\limits_{i\neq 0\ mod \ N}iW\star t_{i}\star  W^{-1}\star W\star \d^{i}\star  W^{-1}.
\end{equation}
Furthermore, using $\d x=x\d+1$ and $\d^{-i}x=x\d^{-i}-i\d^{-i-1}$, we get
\begin{equation}\label{GDo}
( W \star x\star \d W^{-1})_{+}=x\d+\hat\omega_{1}\star x-x\star\hat\omega_{1},
\end{equation}
with the $ W^{-1}=1-\hat\omega_{1}\d^{-1}+...$ being used. Taking eq.(\ref{GDo}) into eq.(\ref{GDn}), we have
\begin{equation}\label{GDp}
( \M\star \L)_{+}=x\d+\hat\omega_{1}\star x-x\star\hat\omega_{1}+\sum\limits_{i\neq 0\  mod \ N}i(W\star t_{i}\star  W^{-1}\star \L^{i})_{+}.
\end{equation}
Taking eq.(\ref{GDp}) into eq.(\ref{GDq}), we have
\begin{eqnarray*}
\frac{\d \L}{\d t_{1,1}^{*}}&=&\L+[x\d,\L]_{\star}+[\sum\limits_{i\neq 0\  mod \ N}i(W\star t_{i}\star  W^{-1}\star \L^{i})_{+},\L]_{\star}\\
&&+[\hat\omega_{1}\star x,\L]_{\star}-[x\star\hat\omega_{1},\L]_{\star}.
\end{eqnarray*}

Next, we define one generating function of the additional symmetries of the noncommutative Gelfand-Dickey hierarchies.
We define a generating operator $Y_g(\lambda,\mu)$ of the additional symmetries as
\begin{eqnarray*}
Y_g(\lambda,\mu)&=&\sum\limits_{m=0}^{\infty}\frac{(\mu-\lambda)^{m}}{m!}\sum\limits_{l=-\infty}^{\infty}\lambda^{-l-m-1}(D_{m,m+l})_{-}\\
&=&\frac{1}{\lambda}\hat\omega(t,-\lambda)\d^{-1}\star\hat\omega^*(t,\mu).
\end{eqnarray*}

\subsection{Noncommutative  Gelfand-Dickey hierarchy with self-consistent sources}

The noncommutative  Gelfand-Dickey hierarchy with self-consistent sources can be constructed by the differential
operator $\L=\d^N+u_2\d^{N-2}+u_3\d^{N-3}+...u_0$ like this:
\[\L_{y_n}=[\B_n+\Phi(t)\d^{-1}\star \Psi(t),\L]_{\star},\ \ n\neq \ 0\ mod\ N,\]
where
\[\d_{t_{n}}\Phi=\B_{n}\star \Phi,\ \ \d_{t_{n}}\Psi=-\B_{n}^{*}\star \Psi.\]
And the corresponding Sato equation becomes
\[W_{y_n}=(-\L^{\frac nN}_-+\Phi(t)\d^{-1}\star \Psi(t))\star W.\]

\sectionnew{String equations of the noncommutative  Gelfand-Dickey hierarchy}
In this section, we will consider the string equation of the noncommutative  Gelfand-Dickey hierarchy. Firstly, we get a special action of the additional flows on $\L^{l}$
\begin{eqnarray}\label{GDs}
\notag \d_{t_{1,-(l-1)}^{*}}\L^l &=&[-(D_{1,-(l-1)})_{-},\L^{l}]_{\star} \\
 \notag &=&[(D_{1,-(l-1)})_{+},\L^{l}]_{\star}+[-D_{1,-(l-1)},\L^{l}]_{\star} \\
&=&[(\M\star \L^{-(l-1)})_{+},\L^{l}]_{\star}+l.
\end{eqnarray}
Basing on the above knowledge, we can get the following proposition on the String equation.
\begin{proposition}
If $\L^{l}$ is independent on the additional variable $t_{1,-(l-1)}^{*}$, then
\begin{equation}\label{GDt}
[\L^{l},\frac{1}{l}(\M\star \L^{-(l-1)})_{+}]_{\star}=1,\ l\in \Z \setminus \{0\},
\end{equation}
is a string equation similar as the standard form $[P,Q]_{\star}=1$ ($P$ and $Q$ are two differential operators) of the noncommutative  Gelfand-Dickey hierarchy.
\end{proposition}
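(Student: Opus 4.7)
The plan is to read off the string equation directly from the identity \eqref{GDs} that was already established above the proposition. Recall that \eqref{GDs} gives
\begin{equation*}
\d_{t_{1,-(l-1)}^{*}}\L^{l}=[(\M\star \L^{-(l-1)})_{+},\L^{l}]_{\star}+l,
\end{equation*}
where the additional $+l$ term arose from the commutator $[\L,\M]_{\star}=1$ being applied to the ``non-$+$'' piece of $D_{1,-(l-1)}=\M\star\L^{-(l-1)}$. So the hypothesis and the conclusion of the proposition are designed precisely to kill the left-hand side of \eqref{GDs}.

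Concretely, my first step is to impose the assumption that $\L^{l}$ does not depend on the additional variable $t_{1,-(l-1)}^{*}$, so that $\d_{t_{1,-(l-1)}^{*}}\L^{l}=0$. Plugging this into \eqref{GDs} yields
\begin{equation*}
0=[(\M\star \L^{-(l-1)})_{+},\L^{l}]_{\star}+l,
\end{equation*}
and moving the constant to the other side and swapping the order of the commutator gives
\begin{equation*}
[\L^{l},(\M\star \L^{-(l-1)})_{+}]_{\star}=l.
\end{equation*}
Dividing by $l\in\Z\setminus\{0\}$ produces the claimed string equation.

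The genuine content of the proposition is therefore not the algebraic manipulation itself but the identification of \eqref{GDs} with a string equation of Douglas-type form $[P,Q]_{\star}=1$. The only point that warrants a brief comment in the writeup is that both entries of the commutator in \eqref{GDt} are honest differential operators: $\L^{l}$ is differential because $\L$ is, while $(\M\star\L^{-(l-1)})_{+}$ is differential by construction as the non-negative projection of a pseudodifferential operator. Since there is no additional analytic or combinatorial obstacle — the heavy lifting was done in deriving \eqref{GDs} via the dressing structure for $D_{m,l}$ in Proposition \ref{GDi} — this step is the main (and essentially only) obstacle, and it is resolved immediately by the construction of the $(\,\cdot\,)_{+}$ projection.
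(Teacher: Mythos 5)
Your argument is correct and is exactly the one the paper intends: the paper states the proposition as an immediate consequence of \eqref{GDs} (offering no separate proof), and your step of setting $\d_{t_{1,-(l-1)}^{*}}\L^{l}=0$, rearranging the commutator, and dividing by $l$ is precisely that deduction. The added remark that both entries of the commutator are differential operators is a reasonable touch but does not change the route.
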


{\bf {Acknowledgements:}}
  This work is supported by  National Natural Science Foundation of China under Grant No. 11571192,   and K. C. Wong Magna Fund in
Ningbo University.

\end{document}